\newtheorem{definition}{Definition}
\newtheorem{lemma}[definition]{Lemma}
\newtheorem{theorem}[definition]{Theorem}
\def\squareforqed{\hbox{\rlap{$\sqcap$}$\sqcup$}}
\def\qed{\ifmmode\squareforqed\else{\unskip\nobreak\hfil
\penalty50\hskip1em\null\nobreak\hfil\squareforqed
\parfillskip=0pt\finalhyphendemerits=0\endgraf}\fi}
\def\endenv{\ifmmode\;\else{\unskip\nobreak\hfil
\penalty50\hskip1em\null\nobreak\hfil\;
\parfillskip=0pt\finalhyphendemerits=0\endgraf}\fi}
\newenvironment{proof}{\noindent \textbf{{Proof~} }}{\qed}
\newenvironment{remark}{\noindent \textbf{{Remark~}}}{\qed}
\mathchardef\ordinarycolon\mathcode`\:
\def\vcentcolon{\mathrel{\mathop\ordinarycolon}}
\newcommand{\nc}{\newcommand}
\nc{\rnc}{\renewcommand}
\nc{\beg}{\begin{equation}}
\nc{\eeq}{{\end{equation}}}
\nc{\beqa}{\begin{eqnarray}}
\nc{\eeqa}{\end{eqnarray}}
\nc{\lbar}[1]{\overline{#1}}
\nc{\bra}[1]{\langle#1|}
\nc{\ket}[1]{|#1\rangle}
\nc{\ketbra}[2]{|#1\rangle\!\langle#2|}
\nc{\braket}[2]{\langle#1|#2\rangle}
\nc{\proj}[1]{| #1\rangle\!\langle #1 |}
\nc{\avg}[1]{\langle#1\rangle}
\nc{\Rank}{\operatorname{Rank}}
\nc{\smfrac}[2]{\mbox{$\frac{#1}{#2}$}}
\nc{\tr}{\operatorname{Tr}}
\nc{\ox}{\otimes}
\nc{\dg}{\dagger}
\nc{\dn}{\downarrow}
\nc{\cA}{{\cal A}}
\nc{\cB}{{\cal B}}
\nc{\cC}{{\cal C}}
\nc{\cD}{{\cal D}}
\nc{\cE}{{\cal E}}
\nc{\cF}{{\cal F}}
\nc{\cG}{{\cal G}}
\nc{\cH}{{\cal H}}
\nc{\cI}{{\cal I}}
\nc{\cJ}{{\cal J}}
\nc{\cK}{{\cal K}}
\nc{\cL}{{\cal L}}
\nc{\cM}{{\cal M}}
\nc{\cN}{{\cal N}}
\nc{\cO}{{\cal O}}
\nc{\cP}{{\cal P}}
\nc{\cQ}{{\cal Q}}
\nc{\cR}{{\cal R}}
\nc{\cS}{{S}}
\nc{\cT}{{\cal T}}
\nc{\cX}{{\cal X}}
\nc{\cY}{{\cal Y}}
\nc{\cZ}{{\cal Z}}
\nc{\cW}{{\cal W}}
\nc{\csupp}{{\operatorname{csupp}}}
\nc{\qsupp}{{\operatorname{qsupp}}}
\nc{\var}{{\operatorname{var}}}
\nc{\rar}{\rightarrow}
\nc{\lrar}{\longrightarrow}
\nc{\polylog}{{\operatorname{polylog}}}
\nc{\wt}{{\operatorname{wt}}}
\nc{\av}[1]{{\left\langle {#1} \right\rangle}}
\nc{\supp}{{\operatorname{supp}}}
\nc{\RR}{{{\mathbb R}}}
\nc{\CC}{{{\mathbb C}}}
\nc{\FF}{{{\mathbb F}}}
\nc{\NN}{{{\mathbb N}}}
\nc{\ZZ}{{{\mathbb Z}}}
\nc{\PP}{{{\mathbb P}}}
\nc{\QQ}{{{\mathbb Q}}}
\nc{\UU}{{{\mathbb U}}}
\nc{\EE}{{{\mathbb E}}}
\nc{\id}{{\operatorname{id}}}
\nc{\CHSH}{{\operatorname{CHSH}}}
\nc{\be}{\begin{equation}}
\nc{\ee}{{\end{equation}}}
\nc{\bea}{\begin{eqnarray}}
\nc{\eea}{\end{eqnarray}}
\nc{\Hom}[2]{\mbox{Hom}(\CC^{#1},\CC^{#2})}
\nc{\rU}{\mbox{U}}
\nc{\ob}[1]{#1}
\nc{\SEP}{{\text{SEP}}}
\nc{\NS}{{\text{NS}}}
\nc{\LOCC}{{\text{LOCC}}}
\nc{\PPT}{{\text{PPT}}}
\nc{\EXT}{{\text{EXT}}}
\nc{\Sym}{{\operatorname{Sym}}}
\nc{\ERLO}{{E_{\text{r,LO}}}}
\nc{\ERLOCC}{{E_{\text{r,LOCC}}}}
\nc{\ERPPT}{{E_{\text{r,PPT}}}}
\nc{\ERLOCCinfty}{{E^{\infty}_{\text{r,LOCC}}}}
\nc{\Aram}{{\operatorname{\sf A}}}
\begin{document}
\title{Indistinguishability of bipartite states by positive-partial-transpose operations in the many-copy scenario}
\author{Yinan Li$^{1}$}
\email{yinan.li@student.uts.edu.au}
\author{Xin Wang$^{1}$}
\email{xin.wang-8@student.uts.edu.au}

\author{Runyao Duan$^{1,2}$}
\email{runyao.duan@uts.edu.au}

\affiliation{$^1$Centre for Quantum Software and Information, Faculty of Engineering and Information Technology, University of Technology Sydney, NSW 2007, Australia}
\affiliation{$^2$UTS-AMSS Joint Research Laboratory for Quantum Computation and Quantum Information Processing, Academy of Mathematics and Systems Science, Chinese Academy of Sciences, Beijing 100190, China}

\begin{abstract} 
A bipartite subspace $S$ is called strongly positive-partial-transpose-unextendible (PPT-unextendible) if for every positive integer $k$, there is no PPT operator supporting on the orthogonal complement of $S^{\otimes k}$.  We show that a subspace is strongly PPT-unextendible if it contains a PPT-definite operator (a positive semidefinite operator whose partial transpose is positive definite). Based on these, we are able to propose a simple criterion for verifying whether a set of bipartite orthogonal quantum states is indistinguishable by PPT operations in the many copy scenario. Utilizing this criterion, we further point out that any entangled pure state and its orthogonal complement cannot be distinguished by PPT operations in the many copy scenario. On the other hand, we investigate that the minimum dimension of strongly PPT-unextendible subspaces in an $m\otimes n$ system is $m+n-1$, which involves a generalization of the result that non-positive-partial-transpose (NPT) subspaces can be as large as any entangled subspace [N. Johnston, Phys. Rev. A \textbf{87}: 064302 (2013)].
\end{abstract}
\maketitle
\section{Introduction}
One fascinating phenomenon of quantum mechanics is the quantum nonlocality, that is, there exist some global quantum operations on a composite system that cannot be implemented by the owners of the subsystems using local operations and classical communication (LOCC) only. A general strategy to study quantum nonlocality is to consider what kind of information processing tasks can be achieved by LOCC. Roughly speaking, if a certain task is accomplished with different optimal global and local efficiencies, then we can construct a class of quantum operations that cannot be realized by LOCC. There is no doubt that the discrimination of orthogonal quantum states is an effective one and received considerable attention in the past decades. See Refs. \cite{Bennett1999a, Walgate2000, Ghosh2001, Horodecki2003, Fan2004a, Ghosh2004, DeRinaldis2004, Nathanson2005, Watrous2005, Hayashi2006, Owari2006, Duan2007a, Feng2009, Yu2011, Bandyopadhyay2011b, Bandyopadhyay2011a, Yu2012, Lancien2013, Chitambar2014b} for a partial list. 

It is well-known that orthogonal quantum states can be perfectly distinguished if globe operations are permitted. And the set up of local distinguishability of quantum states is simple: two or more spatially separated observers share a composite quantum system prepared in one of many known mutually orthogonal quantum states. Their goal is to identify the unknown states by LOCC. When we only need to distinguish two orthogonal multipartite pure states, the perfect local discrimination can always be achieved perfectly \cite{Walgate2000}. Nevertheless, if we have more than two states to distinguish, they cannot be perfectly distinguished by LOCC if one or more states are entangled  \cite{Horodecki2003}. This phenomenon is percipient since entanglement has been shown to ensure difficulty in state discrimination \cite{Hayashi2006}. On the other hand, it has been shown that there exist sets of orthogonal product states that cannot be discriminated perfectly by LOCC \cite{Bennett1999a,Bennett1999b,Divincenzo2003}. 

However, the local indistinguishability might be overcome in the many copy scenario, which is the case that multiple copies of quantum states are provided. A simple example would be four $2\ox 2$ Bell states, which can be perfectly distinguished within two copies \cite{Ghosh2004}, while it is indistinguishable with a single copy \cite{Ghosh2004, Fan2004a, Nathanson2005}. In fact, it has been shown that $N$ orthogonal pure states can always be perfectly distinguished by LOCC when $N-1$ copies of the unknown state are provided \cite{Bandyopadhyay2011a}. This suggests that the available copies of the unknown state play a crucial role in local distinguishability. However, it turns out that there exist two quantum states, one of which is necessarily mixed, are locally indistinguishable, in the many copy scenario \cite{Bandyopadhyay2011a}. Thus, the local indistinguishability of orthogonal quantum states is more robust in mixed states for it persists even in the domain of multiple copies, whereas in the case of pure states it does not. 

Proving local indistinguishability is hard even in the bipartite case, since our knowledge about LOCC is limited.  To circumvent the difficulty, one approach is to show the indistinguishability by operations completely preserving positivity of partial transpose (denote it shortly by PPT operations), and local indistinguishability automatically follows since the set of all LOCC operations will also preserving positivity of partial transpose. The advantage of this approach is that the set of PPT operations enjoys a tractable mathematical structure. It has been shown that the bipartite maximally entangled state and the normalized projection onto its orthogonal complement cannot be distinguished by PPT operations in the many copy scenario \cite{Yu2014a}. On the other hand, the notion of PPT plays a significant role in quantum information theory. It has been used to provide some convenient criterion for the separability of quantum states \cite{Peres1996,Horodecki1996,Horodecki1998}, and study the problem of entanglement distillation, pure state transformation and communication over quantum channels (e.g. \cite{Rains2001,Wang2016c,Wang2016,Ishizaka2004,Matthews2008,Leung2015c,Wang2016g}).  

In this paper, we contribute a simple criterion for verifying indistinguishability of bipartite quantum states by PPT operations, in the many copy scenario. This criterion is based on the observation that, if the support of some state is strongly PPT-unextendible, then any set of orthogonal states containing it cannot be distinguished by PPT operations in the many copy scenario. Here we say a subspace $S$ of some bipartite Hilbert space \textit{Strongly PPT-unextendible} if for any positive integer $k$, there is no PPT state whose support is a subspace of the orthogonal complement of $S^{\ox k}$. And this concept is a natural generalization of the concept of \textit{strongly unextendible subspace}, introduced in Ref. \cite{Cubitt2011a}.  To observe the strongly PPT-unextendibility, one witness is the so-called PPT-definite operator, which is a positive semidefinite operator whose partial transpose is positive definite. In fact, the existence of PPT-definite operators can be observed efficiently by semidefinite programming (SDP) \cite{Vandenberghe1996}, which is a powerful tool in quantum information theory with many applications (e.g., \cite{Piani2015,Jain2011a,Wang2016a,Duan2016,Skrzypczyk2014,Wang2016d,Berta2015b,Doherty2002a}). As an application, we show that any entangled pure states and the normalized projector onto their orthogonal complement cannot be distinguished by PPT operations in the many copy scenario, which is a far-reaching extension of one of the main results in Ref. \cite{Yu2014a}. Meanwhile, we show that the minimum dimension of strongly PPT-unextendible subspaces in a $m\otimes n$ system is $m+n-1$, which extends the result on the minimum dimension of the PPT-unextendible subspace in Ref. \cite{Johnston2013}. 

\section{Preliminaries}
We review some notations and definitions. In the following, we will use symbols such as $\cA$ (or $\cA'$) and $\cB$ (or $\cB'$) to denote (finite-dimensional) Hilbert spaces associated with Alice and Bob, respectively. The dimension of $\cA$ and $\cB$ are denoted by $d_\cA$ and $d_\cB$. We say two subspaces $S_1$ and $S_2$ of some Hilbert space are orthogonal, denoted by $S_1\perp S_2$, if for any $\ket{\psi_1}\in S_1$ and $\ket{\psi_2}\in S_2$, $\braket{\psi_1}{\psi_2}=0$. The orthogonal complement of a subspace $S$ is denoted by $S^{\perp}=\{\ket{\psi}:\braket{\psi}{\phi}=0~\forall\ket{\phi}\in S\}$. The space of all linear operators over $\cA$ is denoted by $\cL(\cA)$. For convenience, we use $\lambda_{\max}(X)$ and $\lambda_{\min}(X)$ to denote the maximum eigenvalue and the minimum eigenvalue of some operator $X\in\cL(\cA)$. A quantum state is characterized by its density operator $\rho\in\cL(\cA)$, which is a positive semidefinite operator with trace unity.  The support of $\rho$, denoted by $\supp(\rho)$, is defined to be the space spanned by the eigenvectors of $\rho$ with positive eigenvalues. We say a positive semidefinite operator $X$ is supporting on some subspace $S$ of $\cA$, if $\supp(X)$ is a subspace of $S$. 

A bipartite positive semidefinite operator $E_{AB} \in \cL(\cA\ox \cB)$ is
said to be Positive-Partial-Transpose (PPT) if $E_{AB}^{T_{B}}$ is positive semidefinite, where the action of partial transpose (with respect to $B$) is defined as $(\ket{i_A}\bra{k_A}\otimes\ket{j_B}\bra{l_B})^{T_{B}}=\ket{i_A}\bra{k_A}\otimes\ket{l_B}\bra{j_B}$. Moreover, a PPT operator $E_{AB}\in\cL(\cA\ox \cB)$ is said to be \emph{PPT-definite}, if $E_{AB}^{T_{B}}$ is positive definite.

In this paper, the PPT operations used for distinguishing a set of $n$ orthogonal quantum states $\{\rho_1,\dots,\rho_n\}$ can be defined as a $n$-tuple of operators, $(M_k)_{k=1,\dots,n}$, where $M_k\in \cL(\cA\otimes\cB)$ is PPT for $k=1,\dots,n$ and $\sum_{k=1}^nM_k=\1_{\cA\ox\cB}$. Then $\{\rho_1,\dots,\rho_n\}$ is said to be
\begin{enumerate}[(i)]
\item \emph{perfectly distinguishable by PPT operations}, if there exist $(M_k)_{k=1,\dots,n}$, where $M_k\in \cL(\cA\otimes\cB)$ is PPT for $k=1,\dots,n$ and $\sum_{k=1}^nM_k=\1_{\cA\ox\cB}$, such that 
$\tr(M_i\rho_j)=\delta_{ij},$
for any $1\le i,j \le n$;
\item \emph{unambiguously distinguishable by PPT operations}, if there exist $(M_k)_{k=1,\dots,n}$, where $M_k\in \cL(\cA\otimes\cB)$ is PPT for $k=1,\dots,n$ and $\sum_{k=1}^nM_k=\1_{\cA\ox\cB}$, such that 
$\tr(M_i\rho_j)=p_i\delta_{ij}, p_i>0$
for any $1\le i,j \le n$;
\item \emph{indistinguishable by PPT operations}, if it is not unambiguously distinguishable by PPT operation.
\end{enumerate}

These definitions can be naturally generalized when multiple copies are provided. In addition, we would say a set of orthogonal quantum states $\{\rho_1, ..., \rho_n\}$ is indistinguishable by PPT operations in the many copy scenario, if for any positive integer $k$, $\{\rho_1^{\ox k}, ..., \rho_n^{\ox k}\}$ is indistinguishable by PPT operations.

In the end, we say a bipartite subspace $S$ of $\cA\ox\cB$ is said to be PPT-extendible, if there exists a PPT operator $\sigma\in\cL(\cA\ox \cB)$, such that $\sigma$ is supporting on the orthogonal complement of $S$. $S$ is said to be PPT-unextendible if it is not PPT-extendible, and to be strongly PPT-unextendible if for any positive integer $k$, $S^{\ox k}$ is not PPT-extendible. 

\section{Main results}
\subsection{Indistinguishability by PPT operations}
The main result of this paper is a sufficient criterion for verifying the PPT indistinguishability of orthogonal quantum states:
\begin{theorem}\label{P PPT-definite}
For a set of orthogonal bipartite quantum states $\{\rho_1, ..., \rho_n\}$,
if there is a PPT-definite operator supporting on the support of some $\rho_k$, then $\{\rho_1, ..., \rho_n\}$ is indistinguishable by PPT operations in the many copy scenario.
\end{theorem}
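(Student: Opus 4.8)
The plan is to reduce the many-copy statement to a single ``structural'' fact about the orthogonal complement of $S^{\ox k}$, where $S=\supp(\rho_k)$, and then contradict the existence of a successful PPT discrimination measurement. First I would observe that if $E$ is a PPT-definite operator with $\supp(E)\subseteq S$, then for any positive integer $k$ the tensor power $E^{\ox k}$ is again PPT-definite (since $(E^{\ox k})^{T_B}=(E^{T_B})^{\ox k}$ and the tensor product of positive definite operators is positive definite), and $\supp(E^{\ox k})=S^{\ox k}$. So it suffices to treat a single copy: it is enough to show that whenever a state $\rho_k$ in the collection has a PPT-definite operator supported on $\supp(\rho_k)$, the set $\{\rho_1,\dots,\rho_n\}$ cannot be unambiguously distinguished by PPT operations, and then apply this to $\{\rho_1^{\ox k},\dots,\rho_n^{\ox k}\}$ for every $k$.

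For the single-copy core, suppose for contradiction that $(M_1,\dots,M_n)$ is a PPT measurement with $\tr(M_i\rho_j)=p_i\delta_{ij}$, $p_i>0$. Consider the index $k$ singled out in the hypothesis. For every $j\neq k$ we have $\tr(M_k\rho_j)=0$ with $M_k\ge 0$ and $\rho_j\ge 0$, so $\supp(\rho_j)\perp\supp(M_k)$; hence $\supp(M_k)\subseteq\bigl(\sum_{j\neq k}\supp(\rho_j)\bigr)^{\perp}$, and in particular $M_k$ is a nonzero (since $p_k>0$) PPT operator. The key step is to combine this with the PPT-definite witness $E$ supported on $\supp(\rho_k)$: I would compute $\tr(E\,M_k)$. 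On one hand, since $M_k$ is PPT and $E^{T_B}$ is positive definite, $\tr(E M_k)=\tr(E^{T_B} M_k^{T_B})>0$ because $M_k^{T_B}$ is a nonzero positive semidefinite operator and $E^{T_B}$ is strictly positive. On the other hand I want to force $\tr(E M_k)$ to vanish, which is where the orthogonal-complement structure of the other $M_i$'s must enter: using $\sum_i M_i=\1$, write $M_k=\1-\sum_{i\neq k}M_i$, so $\tr(E M_k)=\tr E-\sum_{i\neq k}\tr(E M_i)$, and I would argue each $\tr(E M_i)$ for $i\neq k$ is controlled by $\tr(E\rho_k)$-type quantities — more precisely, since $E$ is supported on $\supp(\rho_k)$, there is a constant $c>0$ with $E\le c\,\rho_k$ (take $c=\lambda_{\max}(E)/\lambda_{\min}(\rho_k|_{\supp\rho_k})$), whence $0\le\tr(E M_i)\le c\,\tr(\rho_k M_i)=c\,p_k\delta_{ki}=0$ for $i\neq k$, and similarly $\tr(E M_k)\le c\,p_k$. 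That inequality alone is not yet a contradiction; instead the clean contradiction comes from the reverse direction: $\tr(E M_i)=0$ for all $i\neq k$ forces $\tr(E M_k)=\tr E>0$, which is consistent, so the actual contradiction I should extract is with \emph{some} $M_i$, $i\neq k$: pick any $i\neq k$ (the set has $n\ge 2$ states, else there is nothing to distinguish); then $p_i>0$ forces $M_i\neq 0$, $M_i$ is PPT, and $\tr(E M_i)\le c\,\tr(\rho_k M_i)=0$ while $\tr(E M_i)=\tr(E^{T_B}M_i^{T_B})>0$ since $M_i^{T_B}\neq 0$ is positive semidefinite and $E^{T_B}$ is positive definite — contradiction. (If $n=1$ the statement is vacuous, as a single state is trivially ``distinguishable''.)

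Thus the logical skeleton is: PPT-definiteness is stable under tensor powers, so reduce to one copy; an unambiguous PPT measurement would produce, for each $i\neq k$, a nonzero PPT operator $M_i$ annihilated by $\rho_k$ and hence by the dominated witness $E$; but a PPT operator cannot have zero overlap with a PPT-definite operator unless it is zero; contradiction. I expect the main obstacle to be the bookkeeping around ``nonzero PPT $\Rightarrow$ nonzero partial transpose'' and the domination inequality $E\le c\rho_k$: one must be careful that $\supp(E)\subseteq\supp(\rho_k)$ genuinely yields an operator inequality after rescaling, and that $M_i^{T_B}\neq 0$ follows from $M_i\neq 0$ (true since partial transpose is a linear bijection on $\cL(\cA\ox\cB)$). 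A secondary subtlety is phrasing everything so it applies verbatim to the $k$-th tensor powers, which is immediate once the reduction in the first paragraph is in place.
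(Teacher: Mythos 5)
Your proof is correct and follows essentially the same route as the paper: the contradiction in both cases comes from pairing the PPT-definite witness with a nonzero PPT measurement element via $\tr(EM)=\tr(E^{T_B}M^{T_B})>0$ versus $\tr(EM)=0$, together with stability of PPT-definiteness under tensor powers. Your direct observation that $(E^{\ox k})^{T_B}=(E^{T_B})^{\ox k}>0$ is in fact a cleaner justification of that stability than the paper's super-multiplicative SDP argument, and you simply inline the paper's intermediate notion of a strongly PPT-unextendible subspace rather than isolating it as a separate lemma.
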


To prove this theorem, we first show the following lemma.
\begin{lemma}\label{Th 1}
For a set of orthogonal quantum states $\{\rho_1, ..., \rho_n\}$, if there exists $k\in\{1,\dots,n\}$ such that $\supp(\rho_k)$ is strongly PPT-unextendible, then $\cS$ is  indistinguishable by PPT operations in the many copy scenario.
\end{lemma}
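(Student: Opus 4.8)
The plan is to prove, for each fixed copy number $m$, that $\{\rho_1^{\ox m},\dots,\rho_n^{\ox m}\}$ is indistinguishable by PPT operations; since $m$ is arbitrary, this is exactly the assertion that $\{\rho_1,\dots,\rho_n\}$ is indistinguishable by PPT operations in the many-copy scenario. I would argue by contradiction: assume that for some $m$ the set $\{\rho_1^{\ox m},\dots,\rho_n^{\ox m}\}$ is unambiguously distinguishable by PPT operations, so there are PPT operators $M_1,\dots,M_n\in\cL(\cA^{\ox m}\ox\cB^{\ox m})$ with $\sum_{i=1}^n M_i=\1$ and $\tr(M_i\rho_j^{\ox m})=p_i\delta_{ij}$, $p_i>0$. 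Let $k$ be the index for which $S:=\supp(\rho_k)$ is strongly PPT-unextendible, and recall $\supp(\rho_k^{\ox m})=S^{\ox m}$.

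The key step is to extract from the $M_j$ with $j\neq k$ a PPT witness living on $(S^{\ox m})^{\perp}$. For every $j\neq k$ we have $\tr(M_j\rho_k^{\ox m})=0$ with $M_j\ge 0$ and $\rho_k^{\ox m}\ge 0$; by the standard fact that $\tr(AB)=0$ for positive semidefinite $A,B$ forces $\supp(A)\perp\supp(B)$, this gives $\supp(M_j)\subseteq\supp(\rho_k^{\ox m})^{\perp}=(S^{\ox m})^{\perp}$. Now set $\sigma:=\sum_{j\neq k}M_j=\1-M_k$. Since the PPT operators form a cone closed under addition, $\sigma$ is PPT; and since $\supp(\sigma)=\sum_{j\neq k}\supp(M_j)\subseteq(S^{\ox m})^{\perp}$, the operator $\sigma$ is supported on $(S^{\ox m})^{\perp}$. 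It remains to check $\sigma\neq 0$: picking any $j_0\neq k$ (here one uses $n\ge 2$, the only nontrivial case), $\tr(\sigma\rho_{j_0}^{\ox m})=\sum_{j\neq k}\tr(M_j\rho_{j_0}^{\ox m})=\tr(M_{j_0}\rho_{j_0}^{\ox m})=p_{j_0}>0$, so $\sigma$ is a nonzero PPT operator. Equivalently $\sigma/\tr\sigma$ is a PPT state whose support lies in $(S^{\ox m})^{\perp}$, which says precisely that $S^{\ox m}$ is PPT-extendible, contradicting the strong PPT-unextendibility of $S$. Hence no such $m$ exists, and the lemma follows.

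I do not anticipate a genuine obstacle: the argument is essentially an unwinding of the definitions together with the trace-orthogonality property of positive semidefinite operators and the closure of the PPT cone under addition. The only point deserving care is verifying that the candidate witness $\sigma=\1-M_k$ is nonzero, so that it constitutes a legitimate certificate of PPT-extendibility rather than the trivial zero operator (and, relatedly, that we are in the case $n\ge 2$); testing $\sigma$ against $\rho_{j_0}^{\ox m}$ for some $j_0\neq k$ settles this immediately.
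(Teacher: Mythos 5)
Your proof is correct and follows essentially the same route as the paper: assume unambiguous PPT distinguishability at some copy number $m$, use $\tr(M_j\rho_k^{\ox m})=0$ to locate a nonzero PPT operator supported on $(S^{\ox m})^{\perp}$, and contradict strong PPT-unextendibility. If anything, your write-up is the more careful one — the paper's proof asserts that $M_1$ itself is supported on the orthogonal complement of $\supp(\rho_1)^{\ox m}$, which is not what the conditions give; the correct witness is $\1-M_1=\sum_{j\neq 1}M_j$ (or any single $M_j$ with $j\neq 1$), exactly as you construct.
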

\begin{proof}
Without loss of generality, we assume $\supp(\rho_1)$ is strongly PPT unextendible. If there exists some positive integer $m$ such that $\{\rho_1^{\ox m},\dots,\rho_n^{\ox m}\}$ can be unambiguously distinguished by PPT operations, there exists a tuple of PPT operators $(M_k)_{k=1,\dots,n}$ such that
$$\tr(M_kP_j^{\ox m})=p_k\delta_{kj},~~ p_k>0,~~j=1,\dots,m,$$
where $P_j$ is the projection onto $\supp(\rho_j^{\ox m})$ for any $1\le j\le n$ and $p_k>0$ for any $1\le k\le n$. 
Notice that $M_1$ will support on the orthogonal complement of $\supp(\rho_1)$, which is a contradiction.
\end{proof}

Now, it is sufficient to show that the property of strongly PPT-unextendible can be observed by PPT-definite operator:

\begin{lemma}\label{perp PPT}
Given a bipartite subspace $S$ of $\cA\ox \cB$, if there is a PPT-definite operator $\sigma$ supporting on $S$, then $S$ is strongly PPT unextendible.
\end{lemma}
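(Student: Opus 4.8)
The plan is to argue by contradiction: suppose $S$ carries a PPT-definite operator $\sigma$, yet $S$ is \emph{not} strongly PPT-unextendible, so there is some positive integer $k$ and a nonzero PPT operator $\tau$ supporting on the orthogonal complement of $S^{\ox k}$. I want to derive a contradiction from pairing a suitable tensor power of $\sigma$ against $\tau$. The key quantity to track is $\tr(\sigma^{\ox k}\,\tau)$. On one hand, since $\supp(\sigma)\subseteq S$ we have $\supp(\sigma^{\ox k})\subseteq S^{\ox k}$, which is orthogonal to $\supp(\tau)\subseteq (S^{\ox k})^{\perp}$; hence $\tr(\sigma^{\ox k}\,\tau)=0$. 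On the other hand, I will show this trace must be strictly positive, giving the contradiction.

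To see the strict positivity, move the partial transpose across the trace: $\tr(\sigma^{\ox k}\,\tau) = \tr\big((\sigma^{\ox k})^{T_B}\,\tau^{T_B}\big) = \tr\big((\sigma^{T_B})^{\ox k}\,\tau^{T_B}\big)$, using that partial transpose is trace-preserving under the pairing and distributes over tensor products (taking the $B$-systems across all $k$ copies). Now $\tau$ is PPT, so $\tau^{T_B}\ge 0$, and $\tau^{T_B}\neq 0$ since $\tau\neq 0$. And $\sigma$ is PPT-definite, so $\sigma^{T_B}$ is positive definite, hence $(\sigma^{T_B})^{\ox k}$ is positive definite as well (a tensor product of positive-definite operators is positive definite). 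The trace of the product of a positive-definite operator with a nonzero positive semidefinite operator is strictly positive: if $Q>0$ and $0\le R\neq 0$, then $\tr(QR)\ge \lambda_{\min}(Q)\,\tr(R) > 0$. Applying this with $Q=(\sigma^{T_B})^{\ox k}$ and $R=\tau^{T_B}$ yields $\tr(\sigma^{\ox k}\,\tau) > 0$, contradicting the vanishing obtained above. Therefore no such $\tau$ exists, and $S$ is strongly PPT-unextendible.

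I do not anticipate a serious obstacle here; the argument is a short positivity computation. The only points needing a little care are bookkeeping ones: first, confirming that the overall partial transpose on $\cA^{\ox k}\ox\cB^{\ox k}$ (transposing the $B$-factor of every copy) satisfies $\tr(XY)=\tr(X^{T_B}Y^{T_B})$ and acts as $(\sigma^{T_B})^{\ox k}$ on $\sigma^{\ox k}$; and second, checking that "supporting on $S$" indeed forces $\supp(\sigma^{\ox k})\subseteq S^{\ox k}$, so that the orthogonality with $\supp(\tau)$ is genuine. Both are immediate from the definitions recalled in the Preliminaries. Combining Lemma \ref{perp PPT} with Lemma \ref{Th 1} then gives Theorem \ref{P PPT-definite} at once.
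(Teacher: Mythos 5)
Your proof is correct and follows essentially the same route as the paper: pair $\sigma^{\ox k}$ against the putative PPT operator $\tau$ on $(S^{\ox k})^{\perp}$, use $\tr(XY)=\tr(X^{T_B}Y^{T_B})$, and derive $0=\tr(\sigma^{\ox k}\tau)>0$. The only (cosmetic) difference is that you justify positive definiteness of $(\sigma^{\ox k})^{T_B}=(\sigma^{T_B})^{\ox k}$ directly from the fact that tensor products of positive definite operators are positive definite, whereas the paper packages the same observation as super-multiplicativity of an auxiliary SDP quantity $T(\cdot)$; your version is the more streamlined of the two.
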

\begin{proof}
Assume that there exists $k$ such that $S^{\ox k}$ is PPT-extendible and $\rho\in\cL(\cA^{\ox k}\ox \cB^{\ox k})$ is the PPT operator supporting on the orthogonal complement of $S^{\ox k}$. We will show
\begin{equation}\label{perp no PPT}
\tr (\sigma^{\ox k}\rho)=\tr((\sigma^{\ox k})^{T_B}\rho^{T_B})>0,
\end{equation}
which is a contradiction since $\supp(\sigma^{\ox k})$ is still a subspace of $\supp(S^{\ox k})$. To see this, we show $\sigma^{\ox k}$ is still PPT-definite. 
If we let $P$ to be the projection onto $\supp(\sigma)$, then $\sigma$ is PPT-definite if and only if $T(\sigma)>0$, where
\begin{equation}\label{prime T}
\begin{split}
T(\sigma)= \max & \ t, \\
\phantom{T(\sigma) }\text{ s.t. }  &  0\le R\le P, \ R^{T_B}\ge t\1.
\end{split}
\end{equation}
Actually, the function $T(\cdot)$ is the super-multiplicative,  i.e. 
$$T(\sigma_1\ox\sigma_2)\ge T(\sigma_1)T(\sigma_2).$$ 
To prove this, we can assume that the optimal solutions to SDP (\ref{prime T}) of $T(\sigma_1)$ and $T(\sigma_2)$ are $\{R_1, t_1\}$ and $\{R_2, t_2\}$, respectively. It is clear that $0\le R_1\ox R_2\le P_1\ox P_2$ and $ R_1^{T_{B_1}}\ox R_2^{T_{B_2}}\ge t_1t_2\1$. Then $\{R_1\ox R_2, t_1t_2 \}$ is a feasible solution to   SDP (\ref{prime T}) of $T(\sigma_1\ox\sigma_2)$, which means that $T(\sigma_1\ox\sigma_2)\ge t_1t_2>0$.
It follows immediately that $\sigma^{\ox n}$ is also PPT-definite, since $T(\sigma^{\ox n})\ge T(\sigma)^n>0$.
\end{proof}

The proof of theorem \ref{P PPT-definite} is then straightforward. If there is a PPT-definite operator supporting on the support of some $\rho_k$, by lemma \ref{perp PPT}, $\supp(\rho_k)$ is strongly PPT-unextendible; Then by lemma \ref{Th 1}, $\{\rho_1,\dots,\rho_n\}$ is PPT indistinguishable in the many copy scenario. \qed

\subsection{Examples of sets of quantum states which are PPT indistinguishable in the many copy scenario}
To see the power of our theorem, we apply it to extend the result Ref. \cite{Yu2014a}, which showed that the bipartite maximally entangled state and the normalized projection onto its orthogonal complement are PPT indistinguishable in the many copy scenario. In fact, our proof illustrates that the restriction to maximally entangled state can be removed.
\begin{theorem}\label{pure state}
Given any \textbf{entangled} state $\ket{\phi}\in\cA\otimes\cB$ with $d_{\cA}=d_{\cB}=d$, let $\rho=\frac{1}{d^2-1}(\1_{\cA\ox \cB}-\proj{\phi})$ be the normalized projection onto its orthogonal complement. Then $\proj{\phi}$ and $\rho$ are PPT indistinguishable in the many copy scenario.
\end{theorem}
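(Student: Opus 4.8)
The plan is to reduce to Theorem~\ref{P PPT-definite}. Since $\proj{\phi}$ and $\rho$ are orthogonal (indeed $\tr(\proj{\phi}\rho)=\tfrac{1}{d^2-1}\tr(\proj{\phi}(\1_{\cA\ox\cB}-\proj{\phi}))=0$), it suffices to exhibit a PPT-definite operator supported on the support of one of the two states. I claim that $\rho$ itself does the job. By construction $\supp(\rho)$ is the orthogonal complement of $\ket{\phi}$, so $\rho$ (equivalently $\1_{\cA\ox\cB}-\proj{\phi}$) is a positive semidefinite operator supported on $\supp(\rho)$; the only point to verify is that its partial transpose is positive definite.

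For that I would take the Schmidt decomposition $\ket{\phi}=\sum_i\sqrt{\lambda_i}\,\ket{i}_{\cA}\ket{i}_{\cB}$ with $\lambda_i\ge 0$, $\sum_i\lambda_i=1$, and compute directly that $\proj{\phi}^{T_B}\,\ket{m}_{\cA}\ket{n}_{\cB}=\sqrt{\lambda_m\lambda_n}\,\ket{n}_{\cA}\ket{m}_{\cB}$. Thus $\proj{\phi}^{T_B}$ preserves each line $\CC\,\ket{m}_{\cA}\ket{m}_{\cB}$, acting there as multiplication by $\lambda_m$, and preserves each two-dimensional block $\operatorname{span}\{\ket{m}_{\cA}\ket{n}_{\cB},\ket{n}_{\cA}\ket{m}_{\cB}\}$ with $m\ne n$, acting there as $\sqrt{\lambda_m\lambda_n}$ times the flip. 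Hence the spectrum of $\proj{\phi}^{T_B}$ is $\{\lambda_m\}_m\cup\{\pm\sqrt{\lambda_m\lambda_n}\}_{m<n}$, and in particular $\lambda_{\max}(\proj{\phi}^{T_B})=\max_m\lambda_m$. Now the hypothesis that $\ket{\phi}$ is entangled is exactly the statement that its Schmidt rank is at least two, i.e. $\max_m\lambda_m<1$; therefore $\rho^{T_B}=\tfrac{1}{d^2-1}(\1_{\cA\ox\cB}-\proj{\phi}^{T_B})$ is positive definite, so $\rho$ is PPT-definite. Applying Theorem~\ref{P PPT-definite} with $\rho_k=\rho$ then gives that $\{\proj{\phi},\rho\}$ is indistinguishable by PPT operations in the many copy scenario.

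I do not expect a serious obstacle once Theorem~\ref{P PPT-definite} is available; the entire content is the spectral identity $\lambda_{\max}(\proj{\phi}^{T_B})=\max_m\lambda_m$ together with the elementary observation that this maximal Schmidt coefficient is $<1$ precisely for entangled $\ket{\phi}$ and equals $1$ for product states, which is why the phenomenon (and the argument) genuinely fails outside the entangled case. If one prefers to avoid writing out the block decomposition, an alternative is to note that $\|\proj{\phi}^{T_B}\|_{\infty}$ depends only on the Schmidt coefficients and is convex in them, hence maximized at an extreme point $\lambda=(1,0,\dots,0)$ with value $1$, giving the same strict bound in the entangled case; but the direct computation above is shorter and self-contained.
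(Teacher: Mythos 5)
Your proof is correct and follows essentially the same route as the paper: reduce to Theorem~\ref{P PPT-definite} by showing $\1-\proj{\phi}$ is PPT-definite on $\supp(\rho)$, which comes down to the identity $\lambda_{\max}(\proj{\phi}^{T_B})=\max_m\lambda_m<1$ for entangled $\ket{\phi}$. The only cosmetic difference is that you read off the exact spectrum from the flip action on the blocks $\operatorname{span}\{\ket{mn},\ket{nm}\}$, whereas the paper writes $\proj{\phi}^{T_B}$ as a sum over symmetric and antisymmetric rank-one projectors to reach the same bound.
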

\begin{proof}
It is easy to see that $\lambda_{\min}((\1-\proj{\phi})^{T_B})>0$ is equivalent to  $\lambda_{\max}(\proj \phi^{T_B})<1$.
Suppose that the Schmidt rank of $\ket{\phi}$ is $m$ ($>1$), and the Schmidt decomposition of $\ket{\phi}$ is $\ket{\phi}=\sum_{i=1}^{m}\lambda_i\ket{ii}$ with $\lambda_1^2\ge...\ge\lambda_m^2$ and $\sum_{i=1}^m\lambda_i^2=1$. The partial transposition (with respect to $\cB$) of $\proj{\phi}$ is:
\begin{equation}
\begin{split}
\proj{\phi}^{T_B}&=\sum_{i=1}^m\lambda_i^2\proj{ii}+\sum_{i\ne j}\lambda_i\lambda_j\ket{ji}\bra{ij}\\
&=\sum_{i=1}^m\lambda_i^2\proj{ii}+\sum_{i>j}\frac{\lambda_i\lambda_j}{2}[(\ket{ij}+\ket{ji})(\bra{ij}+\bra{ji})\\&+(\ket{ij}-\ket{ji})(\bra{ij}-\bra{ji})].
\end{split}
\end{equation}
This shows that $\lambda_{\max}(\proj \phi^{T_B})=\lambda_1^2<1$.
Therefore, $(\1-\proj{\phi})^{T_B}$ is positive definite and the result follows directly from Theorem \ref{P PPT-definite}.
\end{proof}
\medskip

On the other hand, it is also interesting to construct sets of PPT indistinguishable orthogonal quantum states in the many copy scenario without invoking technique of unextendible product bases (UPBs) \cite{Bennett1999a}.  Utilizing theorem \ref{P PPT-definite}, we exhibit one simple example as follows:
\medskip
\newline
\textbf{Example 1} Let $d_\cA=d_\cB=d$. Choose a set of orthogonal basis $\{\ket{\phi_1},\dots,\ket{\phi_{d^2}}\}$ of $\cA\ox \cB$, such that $\ket{\phi_i}$ is maximally entangled for any $i=1,\dots, d^2$. For any positive integer $d^2-d+1\le m\le d^2$ and any $k=2,\dots,d^2-m+1$, we can construct $\{\rho_i:i=1,\dots, k\}$ which is PPT indistinguishable in the many copy scenario.
\medskip

\begin{proof}
We first show that the projection $P$ onto the subspace ${\rm span}\{\ket{\phi_1},\dots,\ket{\phi_m}\}$ is PPT-definite.
Notice that $P=\1_{d^2}-\sum_{i=m+1}^{d^2}\proj{\phi_i}$, then
\begin{align}
\lambda_{\min}(P^{T_B})&=1-\lambda_{\max}(\sum_{i=m+1}^{d^2}\proj{\psi_i}^{T_B})\\
&\ge 1-\sum_{i=m+1}^{d^2}\lambda_{\max}(\proj{\psi_i}^{T_B})\\
&\ge 1-\frac{d^2-m}{d}\ge\frac{1}{d}>0, \label{MES T eig}
\end{align}
where Ineq. (\ref{MES T eig}) uses the fact that $\lambda_{\max}(\proj{\psi_i}^{T_B})=1/d$.

Then $\{\rho_1,\dots,\rho_k\}$ can be chosen as:
\begin{align*}
\rho_1=&\frac{1}{m}\sum_{i=1}^{ m}\proj{\phi_i},\\
\rho_2=&\proj{\phi_{m+1}},\\
&...\\
\rho_k=&\proj{\phi_{m+k-1}},
\end{align*}
where $k\le d^2-m+1$.
\end{proof}
\begin{remark}
For a general set of pure states $\{\ket{\psi_1},\dots,\ket{\psi_{m}}\}$, we suppose $\lambda_i$ is the largest Schmidt coefficient of $\ket{\psi_i}$. We can use similar technique to show that if $\sum_{i=1}^{m} \lambda_i<1$, then any set of states $\{\rho_1,...,\rho_k\}$ with $\rho_1=\frac{1}{m}\sum_{i=1}^n\proj{\psi_i}$ is PPT indistinguishable in the many copy scenario. 
\end{remark}

\subsection{Minimum dimension of strongly PPT-unextendible subspace}
It is of great interest to connect PPT distinguishability with other concepts in quantum information theory, and there is no doubt that lemma \ref{Th 1} provides one.  The notion of strongly PPT unextendible subspace is a natural generalization of strongly unextendible subspace, while the latter has been shown widely useful in quantum information theory. For instance, it has been shown in Ref. \cite{Cubitt2008a} that the minimum dimension of such a subspace is $d_\cA+d_\cB-1$, and this result has been applied to show the superactivation of the asymptotic zero-error classical capacity of a quantum channel \cite{Cubitt2011a,Duan2010}. Since there is also no product state in the orthogonal complement of PPT-unextendible subspaces, the minimum dimension of the strongly PPT-unextendible subspace is at least $d_\cA+d_\cB-1$. In Ref. \cite{Johnston2013}, a PPT-unextendible subspace of dimension $d_\cA+d_\cB-1$ has been explicitly constructed  \cite{Johnston2013}. To be specific, this subspace is the orthogonal complement of 
$$S={\rm span}\{\ket{j}\ket{k+1}-\ket{j+1}\ket{k}:~~0\leq j\leq d_{\cA}-2,~~0\leq k\leq d_{\cB}-2\}.$$ 
Using Lemma \ref{perp PPT}, we show that the subspace $\cS^\perp$ is actually strongly PPT-unextendible, which illustrated that the minimum dimension of strongly PPT-unextendible subspaces in $\cA\ox \cB$ is also $d_\cA+d_\cB-1$.

\begin{theorem}\label{minimum size}
Let $d_\cA=m$, $d_\cB=n$ satisfying $2\leq m\leq n$, and $S$ is defined as above. Then $S^\perp$ is strongly PPT-unextendible.
\end{theorem}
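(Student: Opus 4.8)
The plan is to invoke Lemma~\ref{perp PPT}: it suffices to produce a single PPT-definite operator $\sigma$ with $\supp(\sigma)\subseteq S^\perp$, and then $S^\perp$ is automatically strongly PPT-unextendible. So the first step is to pin down $S^\perp$ concretely. Expanding a generic vector as $\sum_{j,k}c_{jk}\ket{j}\ket{k}$, orthogonality to every generator $\ket{j}\ket{k+1}-\ket{j+1}\ket{k}$ of $S$ says exactly that $c_{j,k+1}=c_{j+1,k}$ for all admissible $j,k$, i.e. that $c_{jk}$ depends only on the anti-diagonal index $j+k$. Hence $S^\perp=\mathrm{span}\{\ket{v_\ell}:\ell=0,\dots,m+n-2\}$, where $\ket{v_\ell}:=\sum_{j+k=\ell,\,0\le j\le m-1,\,0\le k\le n-1}\ket{j}\ket{k}$; distinct values of $\ell$ sit in disjoint coordinate blocks, so these $m+n-1$ vectors are nonzero and mutually orthogonal, which also reconfirms $\dim S^\perp=m+n-1$.

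Next I would take as witness $\sigma:=\sum_{\ell=0}^{m+n-2}w_\ell\proj{v_\ell}$ with weights $w_\ell>0$ to be fixed. Orthogonality of the $\ket{v_\ell}$ makes $\sigma\ge 0$ with $\supp(\sigma)=\mathrm{span}\{\ket{v_\ell}\}=S^\perp$ automatic, so the whole content is to choose the $w_\ell$ so that $\sigma^{T_B}$ is positive definite. The crucial computation is that $(\proj{v_\ell})^{T_B}$ acts on the standard basis by $\ket{c}\ket{d}\mapsto\ket{\ell-d}\ket{\ell-c}$ whenever the target is a legal index vector, and by $0$ otherwise; in particular it fixes the coordinate difference $c-d$. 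Consequently $\sigma^{T_B}$ decomposes as a direct sum of blocks $M_\delta$ labelled by $\delta=c-d\in\{-(n-1),\dots,m-1\}$, and these blocks together exhaust $\cA\ox\cB$. Parametrising the $\delta$-block by the first coordinate, one reads off $(M_\delta)_{p,c}=w_{p+c-\delta}$, and a short index check shows $0\le p+c-\delta\le m+n-2$ throughout, so each $M_\delta$ is a genuine Hankel matrix of the weight sequence, of size at most $\min\{m,n\}$.

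Finally I would take the weights to be a moment sequence, e.g. $w_\ell:=\sum_{t=1}^{n}t^\ell$ (more generally $w_\ell=\int t^\ell\,d\mu(t)$ for any positive measure $\mu$ on $(0,\infty)$ with at least $\min\{m,n\}$ points of support). After an innocuous reindexing within each block, $M_\delta=\sum_{t=1}^n t^{|\delta|}\,v_t v_t^{\!\top}$, where the $v_t=(1,t,t^2,\dots)^{\!\top}$ are Vandermonde vectors of the block dimension; since that dimension is at most $\min\{m,n\}\le n$, suitably many of the $v_t$ are linearly independent, so $M_\delta$ is a positive combination of rank-one positive operators spanning the block and is therefore positive definite. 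Hence $\sigma^{T_B}>0$, so $\sigma$ is PPT-definite with support in $S^\perp$, and Lemma~\ref{perp PPT} finishes the proof.

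The step I expect to be the real obstacle is the partial-transpose bookkeeping in the second paragraph: verifying that $(\proj{v_\ell})^{T_B}$ is exactly the reflection $\ket{c}\ket{d}\mapsto\ket{\ell-d}\ket{\ell-c}$ (including when it vanishes), that it preserves $c-d$ and hence block-diagonalises $\sigma^{T_B}$, and that the Hankel index $p+c-\delta$ never leaves $\{0,\dots,m+n-2\}$ so that the weights it refers to actually exist. Once the block-Hankel picture is in place, positivity is the classical fact that a Hankel matrix of a genuine moment sequence — the Gram matrix of the monomials $1,t,t^2,\dots$ against a measure with enough support points — is positive definite, which is routine.
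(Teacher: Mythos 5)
Your proof is correct, and it reaches the goal by the same overall strategy as the paper (exhibit a PPT-definite operator supported on $S^\perp$ and invoke Lemma~\ref{perp PPT}, after block-diagonalising the partial transpose according to the coordinate difference $c-d$ --- your blocks $M_\delta$ are exactly the paper's $P_a$ and $Q_b$), but it settles the crucial positivity step in a genuinely different and cleaner way. The paper leaves the weights $x_i,y_j$ free and argues by induction on the dimension that they can be chosen so that all leading principal minors of the Hankel blocks are positive, a step it carries out somewhat informally (``it is then straightforward to see that feasible solutions always exist''). You instead make the explicit choice $w_\ell=\sum_{t=1}^{n}t^\ell$, which turns each block, after the shift by $|\delta|$, into the Gram matrix $\sum_t t^{|\delta|}v_tv_t^{\top}$ of Vandermonde vectors; since each block has size at most $\min\{m,n\}\le n$, enough of the $v_t$ are independent and positive definiteness is immediate. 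Your index bookkeeping checks out: $(\proj{v_\ell})^{T_B}\ket{c}\ket{d}=\ket{\ell-d}\ket{\ell-c}$ (with the vanishing convention you state) does preserve $c-d$, and $0\le p+c-\delta\le m+n-2$ holds on every block, so all referenced weights exist. The trade-off is that the paper's induction needs no clever ansatz, whereas your moment-sequence construction is non-inductive, fully explicit, and sidesteps the minor-by-minor feasibility argument entirely; if anything it is the more rigorous of the two.
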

\begin{proof}
Denote $S^\perp$ by $S_{mn}$ with respect to the dimension of $\cA$ and $\cB$. $S_{mn}$ can be written in the following form:
\begin{equation*}
\begin{split}
S_{mn}=&{\rm span}\{\ket{\psi_s}=\sum_{j=0}^{m-1-s}\ket{j}\ket{m-1-s-j}:~s=0,\dots,m-1; \\
&\ket{\phi_t}=\sum_{j=t-m+1}^{\min\{n-1,t\}}\ket{t-j}\ket{j}:~t=m,\dots,m+n-2\}.
\end{split}
\end{equation*}

We claim that there exists positive real numbers $x_0,x_1,\dots,x_{m-1},y_{m},\dots,y_{m+n-2}$ such that 
\begin{equation}\label{rho1}
\rho_{mn}=\sum_{s=0}^{m-1}x_{m-1-s}\proj{\psi_s}+\sum_{t=m}^{m+n-2}y_{t}\proj{\phi_t}
\end{equation}
is PPT-definite. 

Notice that
\begin{equation}
\begin{split}
\proj{\psi_s}^{T_B}&=\sum_{j_1,j_2=0}^{n-1-s}\ket{j_1}\bra{j_2}\otimes\ket{m-1-s-j_2}\bra{m-1-s-j_1},\\
\proj{\phi_t}^{T_B}&=\sum_{j_1,j_2=t-m+1}^{\min\{n-1,t\}}\ket{t-j_1}\bra{t-j_2}\otimes\ket{j_2}\bra{j_1}.
\end{split}
\end{equation}
We consider the matrix form of $\rho_{mn}^{T_B}$ under the computational basis. Divide $\{\ket{jk}:~j=0,\dots,m-1,~k=0,\dots,n-1\}$ into the following families:
\begin{equation}
\begin{split}
&\cP_a=\{\ket{m-1-a+t}\ket{t}:~0\leq t\leq a\}~a=0,\dots,m-1;\\
&\cQ_b=\{\ket{r}\ket{r+b}:~0\leq r\leq \min\{n-1-b,m-1\}\}~b=1,\dots,n-1.
\end{split}
\end{equation}
The submatrices spanned by $\cP_a$ and $\cQ_b$ are denoted by $P_a$ and $Q_b$. More precisely, $P_{a}$ and $Q_{b}$ have the following form:
\begin{equation}\label{matrices}
\begin{split}
P_a&=\begin{pmatrix}
x_{a} & x_{a-1} &\cdots & x_0\\
x_{a-1} & \reflectbox{$\ddots$} & \reflectbox{$\ddots$} & y_m\\
\reflectbox{$\ddots$}    & \reflectbox{$\ddots$} & \reflectbox{$\ddots$} & \vdots\\
x_0 &y_m &\cdots  &y_{m+a-1}\\
\end{pmatrix}_{(a+1)\times (a+1)}~{ 0\leq a\leq m-1},\\
Q_b&=\begin{pmatrix}
x_{m-1-b} &\cdots& x_{0} &y_m & \cdots & y_{m+b-1}\\
\vdots & \reflectbox{$\ddots$} & \reflectbox{$\ddots$} & \reflectbox{$\ddots$} &\reflectbox{$\ddots$} &\vdots\\
x_0    & \reflectbox{$\ddots$} & \reflectbox{$\ddots$} & \reflectbox{$\ddots$} &\reflectbox{$\ddots$} &\vdots\\
y_m    & \reflectbox{$\ddots$} & \reflectbox{$\ddots$} &\reflectbox{$\ddots$}  &\reflectbox{$\ddots$} &\vdots\\
\vdots   & \reflectbox{$\ddots$} & \reflectbox{$\ddots$} &\reflectbox{$\ddots$}  &\reflectbox{$\ddots$} &\vdots\\
y_{m+b-1} &\cdots& \cdots &\cdots & \cdots & y_{2m+b-2}\\
\end{pmatrix}_{(m)\times (m)}~{1\leq b\leq n-m},\\
Q_b&=\begin{pmatrix}
x_{m-1-b} &\cdots& x_{0} &y_m & \cdots & y_{n-1}\\
\vdots & \reflectbox{$\ddots$} & \reflectbox{$\ddots$} & \reflectbox{$\ddots$} &\reflectbox{$\ddots$} &\vdots\\
x_0    & \reflectbox{$\ddots$} & \reflectbox{$\ddots$} & \reflectbox{$\ddots$} &\reflectbox{$\ddots$} &\vdots\\
y_m    & \reflectbox{$\ddots$} & \reflectbox{$\ddots$} &\reflectbox{$\ddots$}  &\reflectbox{$\ddots$} &\vdots\\
\vdots   & \reflectbox{$\ddots$} & \reflectbox{$\ddots$} &\reflectbox{$\ddots$}  &\reflectbox{$\ddots$} &\vdots\\
y_{n-1} &\cdots& \cdots &\cdots & \cdots & y_{2n-b-2}\\
\end{pmatrix}_{(n-b)\times (n-b)}~{b > n-m}.
\end{split}
\end{equation}
Then we have 
$$\rho_{mn}^{T_B}=(\oplus_{a=0}^{m-1} P_a)\oplus(\oplus_{b=1}^{n-1}Q_b).$$

To make sure that $\rho_{mn}^{T_B}$ is positive definite, it is equivalent to make sure that $P_a$ and $Q_b$ are positive definite for $a=0,\dots,m-1$ and $b=1\dots, n-1$.

The case $m=n=2$ is easy to verify. Notice that $S_{22}={\rm span}\{\ket{00},\ket{01}+\ket{10},\ket{11}\}$ and we can easily choose a  operator  $\rho=2(\proj{00}+\proj{11})+(\ket{01}+\ket{10})(\bra{01}+\bra{10})$ which is PPT-definite.

We first consider $m=n$ and prove it by mathematical induction. Specifically, if $\rho_{mm}^{T_B}$ is positive definite, then we can construct $\rho_{m+1m+1}^{T_B}$ which is also positive definite.  Notice that when $m=n$, we can assume $x_a=y_{m-1+a}$, then we have $P_a=Q_{m-1-a}$ for $a=0,\dots,m-2$ and $\rho_{mm}^{T_B}=P_{m-1}\oplus [\oplus_{k=0}^{m-2}(P_k\oplus Q_{m-1-k})]$. Since $\rho_{mm}^{T_B}$ is positive definite, there exist positive $x_0,\dots,x_{m-1}$ such that 
$$P_k=\begin{pmatrix}
x_{k} & x_{k-1} &\cdots & x_0\\
x_{k-1} & \reflectbox{$\ddots$} & \reflectbox{$\ddots$} & x_1\\
\vdots    & \reflectbox{$\ddots$} & \reflectbox{$\ddots$} & \vdots\\
x_0 &x_1 &\cdots  &x_{k}\\
\end{pmatrix}_{(k+1)\times (k+1)}$$
is positive definite for $k=0,\dots,m-1$. For $\rho_{(m+1)(m+1)}^{T_B}$, we want to find $x_0',\dots,x_{m}'$ such that 
$$P_k'=\begin{pmatrix}
x_{k}' & x_{k-1}' &\cdots & x_0'\\
x_{k-1}' & \reflectbox{$\ddots$} & \reflectbox{$\ddots$} & x_1'\\
\vdots    & \reflectbox{$\ddots$} & \reflectbox{$\ddots$} & \vdots\\
x_0' &x_1' &\cdots  &x_{k}'\\
\end{pmatrix}_{(k+1)\times (k+1)}>0$$
for $k=0,\dots,m$. Let $x_k'=x_k$ for $k=0,\dots,m-1$, then $P_k'>0$ can be guaranteed. We only need to find a positive $x_m'$ such that 
$$P_m'=\begin{pmatrix}
x_{m}' & x_{m-1} &\cdots & x_0\\
x_{m-1} & \reflectbox{$\ddots$} & \reflectbox{$\ddots$} & x_1\\
\vdots    & \reflectbox{$\ddots$} & \reflectbox{$\ddots$} & \vdots\\
x_0 &x_1 &\cdots  &x_{m}'\\
\end{pmatrix}_{(m+1)\times (m+1)}$$
is positive definite. Notice that we only need to show the leading principal minors of $P'_m$ are all positive definite. Thus we have $m-1$ linear constraints and a quadratic constraint on variable $x_{m}'$. For all linear constraints, the coefficient of $x_{m}'$ are positive, which can be easily derived since $P'_k$ are positive definite for $k=1,\dots,m-1$. Moreover, the coefficient of $(x_m')^2$ in the quadratic constraint is also positive since the following matrix
$$\begin{pmatrix}
x_{m-2} & \cdots & x_1 & x_0\\
\vdots    & \reflectbox{$\ddots$} & x_0 & x_1\\
x_1    & \reflectbox{$\ddots$} & \reflectbox{$\ddots$} & \vdots\\
x_0 &x_1 &\cdots  &x_{m-2}\\
\end{pmatrix}$$
is also positive definite. It is then straightforward to see that feasible solutions are always exist. Thus we can choose one to make sure $\rho_{(m+1)(m+1)}^{T_B}$ is positive definite.

Finally, we extend to the case $m \ne n$ by similar technique. Assume we have already known $x_0',\dots,x_{m-1}'$ such that $\rho_{mm}$ is PPT-definite. For $\rho_{mn}$ where $n>m$, we are going to find $x_0,\dots,x_{m-1}$ and $y_m,\dots, y_{m+n-2}$ such that $\rho_{mn}$ is PPT-definite too. Let $x_i=x_i'$ for $i=0,\dots,m-1$ and $y_{m+j}=x_{j+1}$ for $j=0,\dots, m-2$.  These guarantee that $P_a$ in Eq. \ref{matrices} is positive definite for $a=0,\dots, m-1$ . Then we consider $Q_b$, where $b=1\dots, n-m$. When $b=1$, we only need to determine $y_{2m-1}$ such that $y_{2m-1}>0$ and $Q_1$ is positive definite. This can be done by only guarantee the determinant of $Q_b$ is always positive, which is a linear constraint for $y_{2m-1}$. This is true since the first $m-1$ leading principle minors of $Q_1$ are leading principle minors of $P_{m-2}$, which is automatically positive. Then we can determine $y_{2m-2},\dots, y_{m+n-2}$ with the same technique. Finally, we show that $Q_b$ is positive definite for $b=n-m+1,\dots,n-1$. In fact, this is the $m-1$th leading principle minors of some $Q_{n-m-2}$, which is positive definite. This concludes our proof.
\end{proof}
\medskip

\section{Conclusions and Discussions}
In summary, we study the indistinguishability of bipartite quantum states by PPT operations in the many copy scenario. By introducing the concept of strongly PPT-unextendible subspace, we show that such subspace plays crucial role in determining PPT indistinguishability in the many copy scenario. Driven by that, we show that PPT-definite operators can be served as a witness for strongly PPT-unextendibility. And this witness can be formalized as a semidefinite program, which can be checked efficiently. 

We then apply our result to demonstrate that any entangled pure state and the normalized projector onto its orthogonal complement is PPT indistinguishable in the many copy scenario. This provides a simpler and more general proof than that in Ref. \cite{Yu2014a}. On the other hand, we apply our results to show that the minimum dimension of strongly PPT-unextendible subspaces in an $m\ox n$ quantum system is $m+n-1$. This coincides with the minimum dimension of unextendible subspace \cite{Cubitt2008a} and involves an extension of the result that NPT subspaces can be as large as any entangled subspace in Ref. \cite{Johnston2013}.

The authors would like to thank  Yuan Feng, Andreas Winter and Dong Yang for helpful discussions. This work was partly supported by the Australian Research Council under Grant Nos. DP120103776 and FT120100449.
%

\end{document}